\newtheorem{thm}{Theorem}
\numberwithin{equation}{section}
\newcommand{\gplus}{\oplus}
\newcommand{\gminus}{\ominus}
\newcommand{\pvec}[1]{\mathbf{#1}}
\newcommand{\vvec}[1]{\vec{\mathbf{#1}}}
\newcommand{\norm}[1]{\lVert #1 \rVert}
\DeclareMathOperator{\Aut}{\mathrm{Aut}}
\newcommand{\cA}{\mathcal{A}}
\newcommand{\cB}{\mathcal{B}}
\newcommand{\cS}{\mathcal{S}}
\newcommand{\sC}{\mathbb{C}}
\newcommand{\sR}{\mathbb{R}}
\newcommand{\dx}{\mathrm{d}x}
\newcommand{\rRK}{\mathrm{RK}}
\newcommand{\ttx}[1]{\texttt{#1}\xspace}
\newcommand{\ttp}[1]{\texttt{p4est\_#1}\xspace}
\newcommand{\pforest}{\ttx{p4est}}
\newcommand{\pforestbuild}{\ttp{build}}
\newcommand{\pforestlocal}{\ttp{search\_local}}
\newcommand{\pforestparti}{\ttp{search\_partition}}
\newcommand{\pforesttrans}{\ttp{transfer}}
\newcommand{\tfont}[1]{\textbf{#1}}
\newcommand{\treturn}{\tfont{return}\xspace}
\newcommand{\tsegeval}{\tfont{segeval}\xspace}
\newcommand{\alglab}[1]{\label{alg:#1}}
\newcommand{\algref}[1]{Algorithm~\ref{alg:#1}}
\newcommand{\eqnlab}[1]{\label{eqn:#1}}
\newcommand{\eqnref}[1]{\eqref{eqn:#1}}
\newcommand{\figlab}[1]{\label{fig:#1}}
\newcommand{\figref}[1]{Figure~\ref{fig:#1}}
\newcommand{\seclab}[1]{\label{sec:#1}}
\newcommand{\secref}[1]{Section~\ref{sec:#1}}
\newcommand{\appref}[1]{Appendix~\ref{sec:#1}}
\newcommand{\todo}[1]{{\color{blue}(#1)}}
\newcommand{\thetitle}{Distributed-Memory Forest-of-Octrees Raycasting}
\begin{document}

\markboth{C.\ Burstedde}{\thetitle}

\title{\thetitle}
\author{%
CARSTEN BURSTEDDE
\affil{Institut f\"ur Numerische Simulation, Universit\"at Bonn, Germany}
}

\begin{abstract}%
We present an MPI-parallel algorithm for the in-situ visualization of
computational data that is built around a distributed linear forest-of-octrees
data structure.
Such octrees are frequently used in element-based numerical simulations; they
store the leaves of the tree that are local in the curent parallel partition.

We proceed in three stages.
First, we prune all elements whose bounding box is not visible by a parallel
top-down traversal, and repartition the remaining ones for load-balancing.
Second, we intersect each element with every ray passing its box to
derive color and opacity values for the ray segment.
To reduce data, we aggregate the segments up the octree in a strictly
distributed fashion in cycles of coarsening and repartition.
Third, we composite all remaining ray segments to a tiled partition of the
image and write it to disk using parallel I/O.

The scalability of the method derives from three concepts:
By exploiting the space filling curve encoding of the octrees and by relying on
recently developed tree algorithms for top-down partition traversal, we are
able to determine sender/receiver pairs without handshaking and/or collective
communication.
Furthermore, by partnering the linear traversal of tree leaves with the group
action of the attenuation/emission ODE along each segment, we avoid
back-to-front sorting of elements throughout.
Lastly, the method is problem adaptive with respect to the refinement and
partition of the elements
and to the accuracy of ODE integration.
%
%
\end{abstract}

\category{G.4}{Mathematical Software}{}[Algorithm Design and Analysis]


\terms{Algorithms, Performance, Visualization}

\keywords{Raycasting, forest of octrees, adaptive mesh refinement}

\acmformat{Carsten Burstedde, 2018.  \title.}

\begin{bottomstuff}
The author acknowledges travel support by the Hausdorff Center for Mathematics
(HCM) at the Rheinische Friedrich-Wilhelms-Universit\"at Bonn, Germany.
\end{bottomstuff}

\maketitle

\section{Introduction}
\seclab{intro}

\section{Visualization Model}
\seclab{vismodel}

In this section we begin with defining the geometry of the domain, the camera's
position and properties, and introduce the setup for casting rays through the
geometry to the projection plane.
We focus in detail on the mathematical model for propagating light along a ray
and computing attenuation and emission.
Beginning with the well-known ordinary differential equation (ODE), we propose
the notion of a ray segment that represents the exact effect of the medium
along its length.
We develop the group properties of the ODE's solution into a formalism that
allows to combine, cut, and average arbitrary segments.
Whether segments are separate, touching, or overlapping, the formalism
preserves the physics exactly, which allows us to aggregate all segments of the
ray into the final pixel without requiring global back-to-front processing.
We will rely on this property throughout the paper.

\subsection{Domain description}
\seclab{domaindescription}

We assume that the geometry to be visualized is contained in a space-forest,
that is, a collection of space-trees $\lbrace t_k \rbrace$.  Each quad- or
octree $t_k$ defines an adaptive subdivision of the $d$-dimensional reference
hypercube into its leaves (we use the terms leaf and element interchangeably).
Each node in the tree is either a leaf or the root of a (sub)tree,
and we refer to the union of elements as the mesh.

The data to be visualized may either have been computed in this same forest,
or it may originate from a different source that is covered by the forest and
accessible by element.
We also assume that the space-forest is partitioned in parallel:
Each element and the data it covers is available on exactly one process.
Thus, each element has precisely one owner process, which relates to its owned
elements as the local ones.

We create two- or three-dimensional manifolds by allowing for smooth
transformations
\begin{equation}
  \eqnlab{geometrytrafo}
  \pvec j_k: [0, 1]^d \rightarrow D_k \subset \sR^3
\end{equation}
of the reference hypercube into each of the
tree's domains $D_k$ in real space.  Thus, we define a separate transformation
for each tree, which is then inherited by all elements of a tree.  We can also
speak of the reference cube for an individual element, which is simply a
subcube.
All transformations must be compatible in such a way that
$\cup_k D_k$
makes up the complete geometry.  This
approach allows us to define the geometry with relatively few parameters
compared to storing an independent transformation for each element.
Usually it is possible to replicate the geometry parameters globally, while
there are far too many elements to replicate the mesh in parallel.
If needed, there are ways to avoid replicating per-tree data as well
\cite{BursteddeHolke17}.

In our application we define each transformation $\pvec j_k$ as the tensor
product of degree-$R$ Lagrange polynomials $\psi_m(t)$, $m = 0, \ldots, R$.
For numerical stability, we choose Gau\ss-Lobatto nodes $\eta_m \in [0, 1]$ as
the nodes of the basis polynomials, which means that
\begin{equation}
  \eqnlab{nodal}
  \psi_m(\eta_{m'}) = \delta_{mm'}
  .
\end{equation}
The image of the $d$-dimensional unit cube is defined by $(R + 1)^d$ control
points $\pvec z_{m_1, \ldots, m_d}$.  We derive these points per-tree from any
a-priori definition of the geometry (e.g., a CAD system), producing the formula
\begin{equation}
  \eqnlab{tensorgeom}
  \pvec j_k (t_1, \ldots, t_d) =
  \sum_{m_1, \ldots, m_d = 0}^R \pvec z_{k; m_1, \ldots, m_d}
    \psi_{m_1} (t_1) \cdots \psi_{m_d} (t_d)
  .
\end{equation}
This expression is legal in affine space since Lagrange polynomials form a
partition of unity,
\begin{equation}
  \eqnlab{tensorunity}
  \sum_{m_1, \ldots, m_d = 0}^R
    \psi_{m_1} (t_1) \cdots \psi_{m_d} (t_d) =
  \sum_{m_1} \psi_{m_1} (t_1) \cdots \sum_{m_d} \psi_{m_d} (t_d) =
  1
  , \quad
  t_1, \ldots, t_d \in \sR
  ,
\end{equation}
such that translating points commutes with translating the domain.

Gau\ss-Lobatto nodes have the property that
$\eta_0 = 0$, $\eta_N = 1$.  Together with the nodality \eqnref{nodal} this
implies that the shape of a tree's boundary face, edge, or vertex is defined
by the control points on that part of the boundary alone.  The key observation
is that the geometry of a tree's face in three dimensions has the structure of
\eqnref{tensorgeom} reduced to dimension two, using a strict subset of the
control points.

To visualize two-dimensional manifolds, we need to intersect a straight line
with the image of an element's reference square under the transformation $\pvec
j_k$.  This produces a certain number of intersection points, usually none or
zero (but maybe more for non-affine transformations) that we use to update the
color information for that ray.  For three-dimensional manifolds we intersect a
ray with each of the faces of an element and connect the intersection points by
a path through the element.
Then we update the ray's color information by sampling the elements' properties
along the path segments.

We emphasize that for either two or three space dimensions, the intersection
routine is the same, using \eqnref{tensorgeom} with $d = 2$.  We touch on the
mathematical details in \appref{intersection}.  We also note that it would be
possible to use a different type of geometry transformation altogether, as long
as the intersecton of a two-dimensional mapped square with a ray can be
computed analytically or numerically.
%

When we have a non-affine geometry transformation, an element may produce
multiple path segments that intermix with those from neighboring elements.
Beginning with \secref{raysegments} below, we propose a way of supporting this
without introducing dependencies between any two elements.

\subsection{Camera setup}
\seclab{camerasetup}

\begin{figure}
  \begin{center}
    \includegraphics[height=.5\linewidth]{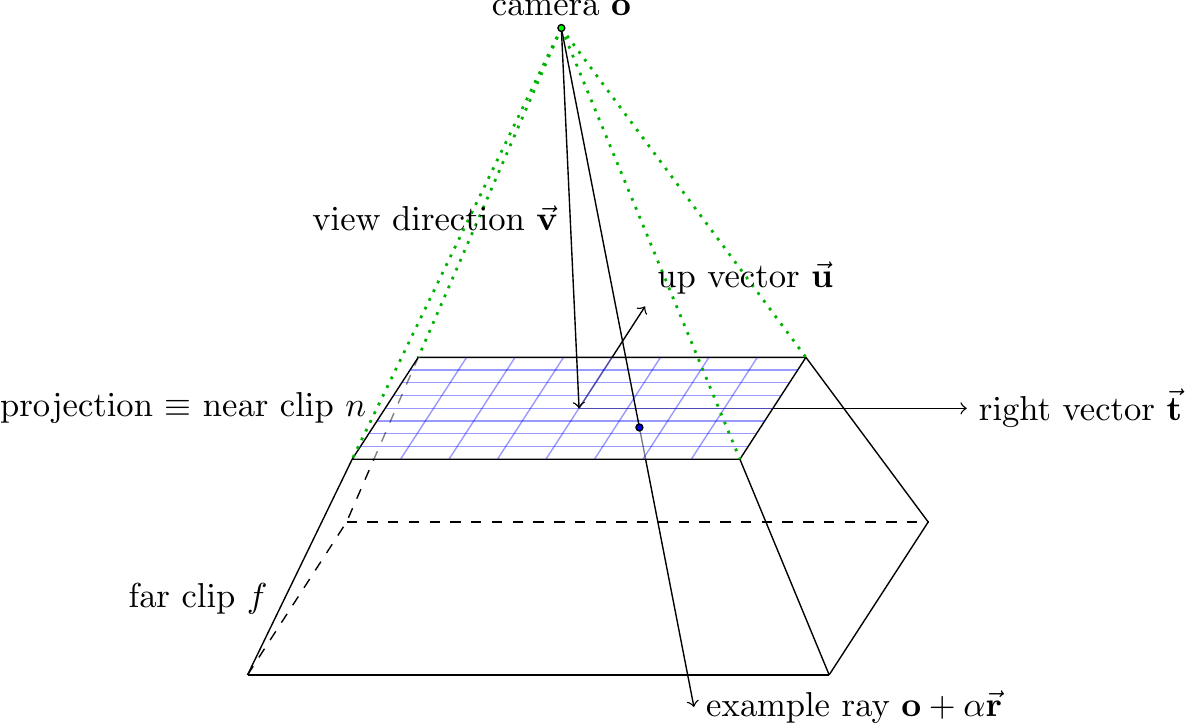}
  \end{center}
  \caption{Our conventions for the camera setup.
     The projection rectangle is divided into pixels (blue), where each pixel
     is associated with a ray through its center.
     The view frustum lies between the near and far clipping planes (at
     distances $n$ and $f$ from the camera) and the sides of the pyramid stump.
     The (left-handed) camera coordinate system is defined by the unit right
     and up-vectors and the normalized view direction,
     $(\vvec t, \vvec u, \vvec v / n)$,
     with the origin $\pvec o$ at the camera.
     A ray with the pixel coordinates $(i, j)$ has the direction vector
     $\vvec r$ given by equation \eqnref{ray}.
     The ray coordinate is $\alpha \ge 0$, where $\alpha = 0$ identifies the
     camera and $\alpha = 1$ the intersection of the ray with its pixel's
     center.
  }%
  \figlab{camera}%
\end{figure}%
In this paper, we use a standard way of defining the camera (see
\figref{camera}).  The camera is defined by a point location in space $\pvec
o$, a view direction $\vvec v$, and a projection/image rectangle perpendicular
to the view direction.  The rectangle has two principal axes defined by the unit
vectors $\vvec t$ (right) and $\vvec u$ (up) that are thought to originate at
its center $\pvec o + \vvec v$.  Thus, the distance $n$ of the projection
rectangle from the camera is $n = \norm{\vvec v}$.  The rectangle is logically
divided into a $w \times h$ grid of pixels, each having length $\ell$ in the
projection plane.  The pixels are indexed by zero-based integers $(i, j)$.
Each pixel is associated with a ray that connects the camera with its center
point and extends beyond the projection plane.  The ray is thus a
set of points along a line parameterized by $\alpha \ge 0$,
\begin{equation}
  \eqnlab{ray}
  \pvec o + \alpha \vvec r ,
  \qquad\text{where}\qquad
  \vvec r = \vvec v + \ell \left( \left( i - \frac{w - 1}2 \right) \vvec t
                                + \left( j - \frac{h - 1}2 \right) \vvec u
                           \right) .
\end{equation}
The unit vector in the view direction is $\vvec v / n$.
We also define near and far clipping planes, where we identify the near
clipping plane with the projection plane for convenience.  The distances from
the camera are thus the aforementioned $n$ for the near plane and $f > n$ for the
far plane.
The visible volume is contained in the view frustum defined by
the clipping planes and the four lines that connect the camera with each of the
corners of the projection rectangle.

To compute the ray intersections with the geometry, it is convenient to
transform all geometry locations into the camera system, where the view
direction is aligned with the negative $z$ axis.  Formally, this
camera-aligned geometry transformation is defined as
\begin{equation}
  \eqnlab{camerageometry}
  \pvec X = \Phi (\pvec x )
          = (\vvec t, \vvec u, \vvec v / n)^{T}
            \left( \pvec x - \pvec o \right)
  .
\end{equation}
After this affine-orthogonal transformation, the visible $z$ values lie in
$\lbrack n, f \rbrack$.

We can avoid the computational expense that is seemingly associated with this
calculation:  Due to our definition of the geometry as per-tree
transformations, it suffices to transform each of the per-tree control points
into the camera system.  Examining \eqnref{tensorgeom} and using
\eqnref{tensorunity}, we establish that the
camera-aligned geometry transformation is given by
\begin{equation}
  \eqnlab{camerapoints}
  \pvec J_k = \Phi \circ \pvec j_k
  \qquad
  \Leftrightarrow
  \qquad
  \pvec Z_{k; m_1, \ldots, m_d} =
  \Phi ( \pvec z_{k; m_1, \ldots, m_d} )
  .
\end{equation}
This mechanism effectively changes the per-element transformations as desired
and can be performed in a preprocessing stage before executing the actual
visualization.

Below, we optimize the visibility tests for an element by bounding box checks.
Bounding boxes are computed axis-aligned (abbreviated AABB) with respect to the
camera system by subjecting a subtree's reference domain to its tree's
camera-aligned geometry transformation $\pvec J_k$ and then computing the
minima and maxima of its extent in each coordinate direction.
For high order geometry transformations this is non-trivial and may be
accelerated by computing a more lenient approximate criterion instead.
Once we know this 3D brick in camera coordinates, we project it into the image
plane to determine the equivalent pixel rectangle, which is our (2D) AABB.

\subsection{Computing ray segments}
\seclab{raysegments}

In a classic raycasting procedure, each ray is thought to originate at infinity
and followed to the camera location, intersecting a projection plane somewhere
in between.  Along its path, the ray aggregates color information by passing
through the geometry.  Each ray can be divided into a
number of line segments defined by the ray's intersection with the elements of
the geometry, or in other words the leaves of the data tree.
The usual way of rendering the ray is
to initialize it with a background color or transparency
and to begin with the segment farthest away from the camera.
The segment modifies the color values seen, which are then processed with the
second-farthest segment and so on, back to front.  Processing a ray in this
sequence guarantees that opacity and color values are computed correctly, but
enforces a sequential processing of segments.


In this paper we reformulate the rendering of a ray such that we can process
the segments in any sequence.
We define the fundamental operation of
aggregating two neighboring ray segments into a longer, equivalent one.  Once
we have aggregated all segments, we can process the background color through
the last effective segment and achieve an identical result as with the classic
procedure.
All we require is that the aggregation is an associative mathematical
operation, which may
limit the general class of rendering methods but applies to the examples
discussed here.  In a sense, we can work with a local instead of a global view
of the ray.

Our specific ray model takes into account emission and absorption separately
for each color channel.
Suppose a channel's emission coefficient along a line in real space
parameterized by $x \in \sR$ is
$\gamma(x)$ in units of intensity over length, and the absorption
coefficient is $\beta(x)$ in units of one over length.
This leads to the ordinary differential equation (ODE)
\begin{equation}
  \eqnlab{Iode}
  I' (x) = \gamma (x) - \beta (x) I (x)
\end{equation}
with known solution
\begin{equation}
  \eqnlab{knownIAB}
  I (x) = A(x) I(x_1) +  B(x)
  ,
\end{equation}
where $x_1$ is arbitrary and
\begin{equation}
  \eqnlab{knownAB}
  A(x) = e^{\int_{x_1}^x - \beta (x') \dx'}
  , \qquad
  B (x) = A(x) \int_{x_1}^x \frac{\gamma (x')}{A (x')} \dx'
  .
\end{equation}
Writing $I_i = I (x_i)$ and evaluating the solution at $x_2 = x_1 + \Delta x$,
we obtain the affine-linear formula
\begin{equation}
  \eqnlab{IABcombine}
  I_2 = A I_1 + B
\end{equation}
depending on the two numbers $(A, B) = f (\beta, \gamma)$,
\begin{equation}
  \eqnlab{ABgeneral}
  A = A (x_2)
    = e^{- \int_{x_1}^{x_2} \beta (x') \dx'}
  , \qquad
  B = B (x_2)
    = \int_{x_1}^{x_2} \gamma (x) e^{- \int_{x}^{x_2} \beta (x') \dx'} \dx
  .
\end{equation}
The transmission ratio $A$ is dimensionless, while $B$ is the intensity added
by emission.
$A$ is always positive by construction.
The special case of a transparent medium, $\beta (x) = 0$, amounts to $A = 1$
and $B = \int \gamma (x) \dx$.

The formulas simplify for constant $\beta$ and $\gamma$, which yield
\begin{equation}
  \eqnlab{ABconstant}
  A = e^{- \beta \Delta x} ,
  \qquad
  B 
    = \frac \gamma \beta \left( 1 - A \right)
  .
\end{equation}
The mathematical limit for $\beta = 0$ is well defined
due to the relation $1 - \exp ( - \epsilon ) \approx \epsilon$,
which produces $B = \gamma \Delta x$.

We can thus uniquely identify a segment with the tuple $T$ consisting of the
pixel index, 
the in and out coordinates $x_i$, and the characteristic
values $(A, B)$ for each channel.
This information corresponds to the exact physics.

\subsection{Aggregating ray segments}
\seclab{raycombine}

Supposing that we have two adjacent segments characterized by $(A_1, B_1)$ for
the farther one and $(A_2, B_2)$ for the one closer to the camera,
it follows from \eqnref{IABcombine} that they are equivalent to the segment
\begin{equation}
  \eqnlab{aggregate}
  (A, B) = (A_2, B_2) \gplus (A_1, B_1)
  = (A_2 A_1, B_2 + A_2 B_1 )
  .
\end{equation}
\begin{thm}
  The aggregation formula \eqnref{aggregate} is associative.
  Indeed, the segments $(A, B)$ form a non-abelian group $\cS$ with
  identity $(1, 0)$ and inverse
  \begin{equation}
    \eqnlab{identityinverse}
    \gminus (A, B) = (1 / A, -B / A)
    .
  \end{equation}
\end{thm}
\begin{proof}
  It suffices to show that $\cS = ( \sR_+ \times \sR, \oplus)$ is the
  semi-direct product of two groups with a non-trivial map $\theta$.
  We propose
  \begin{equation}
    \cA = ( \sR_+, \cdot ) ,
    \qquad \cB = ( \sR, + ) ,
    \qquad \cS = \cA \ltimes_\theta \cB
  \end{equation}
  with the group homomorphism $\theta : \cA \to \Aut (B)$ defined by
  \begin{equation}
    \theta (A) = (B \mapsto AB)
    .
  \end{equation}
  The formula for the inverse is inherited from this definition.
  Alternatively, the associativity, the identity element, and
  \eqnref{identityinverse} are verified by inserting into \eqnref{aggregate}.
\end{proof}
We can use equation \eqnref{aggregate} any number of times to aggregate
adjacent segments into
equivalent larger ones without the need to proceed back-to-front globally.
The segments do not need to touch if they are separated only by empty space.

The aggregation equation \eqnref{aggregate} is also useful when the emission
and absorption coefficients are modeled as piecewise constant within an
element, and we approximate the element's effect on the ray by aggregating the
sub-segments within each element.
Due to the associativity, we can do this for any element without accessing
other parts of the ray.

For a slight change of perspective, we may ask the question, which intensity
$I$ follows from constant absorption and emission coefficients?
We postulate the equilibrium $I = AI + B$ and insert \eqnref{ABconstant}, which
yields
\begin{equation}
  \eqnlab{BIrelation}
  I = \frac \gamma \beta
  \qquad\text{or equivalently}\qquad
  B = B (A, I) = I (1 - A)
  .
\end{equation}
Thus, we can interface the visualization code to an element-based application
in a modular way, following any of two variants:
\begin{itemize}
  \item
    The visualization code computes $x_i$ etc.\ for a segment by
    intersecting a ray with an element and passes it to the application.
  \item
    For each color channel, the application computes (or approximates) $A$ from
    its internal absorption model using the segment's geometry.
    Now, there are two choices:
    \begin{enumerate}
      \item
        \label{itemvisbyB}
        The application computes (or approximates) $B$ from
        \eqnref{ABgeneral} or \eqnref{ABconstant}.
      \item
        \label{itemvisbyI}
        The application computes $I$ as the desired intensity, e.g., by
        applying a transfer function to the data.
        Then it computes $B$ from $I$ by \eqnref{BIrelation}.
    \end{enumerate}
  \item
    The visualization code accepts $(A, B)$ and stores the tuple for future
    aggregation of this segment with other segments.
\end{itemize}
We recall that the approximation of $A$ by numerical integration is rather
straightforward, while that of $B$ is less so.
The alternative (\ref{itemvisbyI}) thus removes a fairly complex computation
from the application, possibly trading in some physical accuracy.
As mentioned above, we can enhance the accuracy of this procedure by
subsampling an element with multiple pieces of the ray segment.
We will discuss proper numerical procedures that deliver provable error bounds
in \secref{numapproxseg}.

Once all segments of one ray are aggregated into final values indexed by the
color channel, $(A_i, B_i)$, we can compute the visible intensity $I_{v,i}$
from the
background intensity $I_{b,i}$ by
\begin{equation}
  I_{v,i} = A_i I_{b,i} + B_i
  .
\end{equation}
The intensity values $I_{v,i}$ are then converted into pixel RGB values.
We see that writing an image with a transparency value is straightforward
without background and emission, that is, when $I_{b,i} = 1$ and $B_i = 0$.
Within a generic rendering framework, $A_i$ and $B_i$ can be fed into its
routines for alpha blending.

\subsection{Cutting and overlapping segments}
\seclab{cutoverlap}

Since some geometrical inaccuracies are often present in practice, it may occur
that we create two partially overlapping segments for the same ray.
We can resolve this situation by cutting the segments into pieces to isolate
the overlapping section and then averaging the segment values in the
overlapping interval.
The first operation required is the split of one segment into two shorter
equivalent ones of given lengths, say $\Delta x_2 + \Delta x_1 = \Delta x$.
Inserting these values into \eqnref{ABconstant} leads to
\begin{thm}
We split a segment with values $(A, B)$ into two segments $(A_i, B_i)$,
$i = 1, 2$, according to
\begin{equation}
  \eqnlab{segsplit}
  A_i = A^{\Delta x_i / \Delta x}
  , \qquad
  B_i = B \frac{1 - A_i}{1 - A}
  .
\end{equation}
The special case $A = 1$ leads to $A_i = 1$ and $B_i = B \Delta x_i / \Delta x$.
\end{thm}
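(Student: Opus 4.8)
The plan is to realize the segment $(A, B)$ as arising from a medium with constant coefficients $(\beta, \gamma)$ along a line of length $\Delta x$, which is precisely the setting of \eqnref{ABconstant}, and then simply to re-apply \eqnref{ABconstant} to the two sub-intervals of lengths $\Delta x_1$ and $\Delta x_2 = \Delta x - \Delta x_1$. This modeling assumption is what pins the split down uniquely: the bare requirement that the two pieces aggregate to $(A,B)$ is two equations in four unknowns, whereas insisting that each piece come from the same constant $(\beta,\gamma)$ as the original removes the ambiguity.

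First I would treat the generic case $A \ne 1$, i.e.\ $\beta \ne 0$. From \eqnref{ABconstant} the sub-segment of length $\Delta x_i$ has transmission $A_i = e^{-\beta \Delta x_i} = \bigl(e^{-\beta \Delta x}\bigr)^{\Delta x_i / \Delta x} = A^{\Delta x_i / \Delta x}$, which is the first formula in \eqnref{segsplit}; note it uses only the ratio $\Delta x_i / \Delta x$ and is well defined for every $A > 0$. For the emission part, \eqnref{ABconstant} gives $B_i = \frac{\gamma}{\beta}(1 - A_i)$ and $B = \frac{\gamma}{\beta}(1 - A)$, so eliminating the unknown ratio $\gamma/\beta = B/(1-A)$ between the two yields $B_i = B\,\frac{1 - A_i}{1 - A}$, which is the second formula.

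Next I would dispatch the degenerate case $A = 1$ ($\beta = 0$), where \eqnref{ABconstant} collapses to $B = \gamma\,\Delta x$ and $A_i = 1$, hence $B_i = \gamma\,\Delta x_i = B\,\Delta x_i / \Delta x$. This is consistent with the limit of the generic formula, since $\frac{1 - A^{\Delta x_i/\Delta x}}{1 - A} \to \Delta x_i / \Delta x$ as $A \to 1$ by the relation $1 - \exp(-\epsilon) \approx \epsilon$ used right after \eqnref{ABconstant}, so the two branches agree where they overlap.

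Finally, to confirm that the split is a genuine factorization of the segment and not merely a consistent re-parameterization, I would verify that re-aggregating the two pieces via \eqnref{aggregate} returns $(A, B)$: the transmissions multiply to $A^{\Delta x_1/\Delta x} A^{\Delta x_2/\Delta x} = A$, and $B_2 + A_2 B_1 = \frac{B}{1-A}\bigl[(1 - A_2) + A_2(1 - A_1)\bigr] = \frac{B}{1-A}(1 - A_1 A_2) = B$. I do not expect a real obstacle; the only points requiring any care are the $A = 1$ branch and the observation that the formula depends on $\Delta x_1, \Delta x_2$ solely through their ratio to $\Delta x$, so that it is compatible with a segment that already carries its own in/out coordinates $x_i$.
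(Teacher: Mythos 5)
Your proposal is correct and follows essentially the same route as the paper's proof: derive the split from the constant-coefficient formula \eqnref{ABconstant}, check the $A = 1$ (i.e.\ $\beta \Delta x \to 0$) limit, and confirm by re-aggregating via \eqnref{aggregate} that the two pieces recombine to $(A, B)$. You simply spell out the intermediate algebra (eliminating $\gamma/\beta$ and the computation $B_2 + A_2 B_1 = B$) that the paper leaves implicit, which is fine.
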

\begin{proof}
  Above formula is consistent with \eqnref{ABconstant}
  and correct in the limit $\beta \Delta x \rightarrow 0$.
  Aggregating the two segments using \eqnref{aggregate} yields the original
  segment $(A, B)$.
\end{proof}
The second operation, computing the average of two segments over the same
interval, can be thought of alternating between infinitesimally small pieces of
each ray.
This motivates
\begin{thm}
  We average the segments $(A_i, B_i)$, $i = 1, 2$ using the geometric mean for
  $A$ and the arithmetic mean for $B$,
  \begin{equation}
    \eqnlab{segaverage}
    A = \sqrt { A_1 A_2 }
    , \qquad
    B = \frac12 ( B_1 + B_2 )
    .
  \end{equation}
\end{thm}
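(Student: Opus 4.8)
The plan is to treat \eqnref{segaverage} not as a formula to be deduced from first principles but as the \emph{natural} averaging operation on the group $\cS$, and to support it by (i) verifying the formal properties an average must have and (ii) making the ``alternating infinitesimal pieces'' picture precise, so that it reproduces the geometric mean for $A$ exactly and the arithmetic mean for $B$ in the physically relevant limits.

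First I would record the elementary facts. Since $A_1, A_2 > 0$, the value $\sqrt{A_1 A_2}$ again lies in $\sR_+$ and $\tfrac12(B_1 + B_2)$ in $\sR$, so \eqnref{segaverage} indeed defines an element of $\cS$; the operation is symmetric in its two arguments; and it is idempotent, returning $(A, B)$ when both inputs equal $(A, B)$. These are exactly the properties that earn it the name \emph{average}, in the same spirit as the consistency checks used for the split formula \eqnref{segsplit}.

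Next I would formalize the heuristic. Partition the common interval of length $\Delta x$ into $2n$ pieces of length $\Delta x/(2n)$; by \eqnref{segsplit} a piece of segment $i$ carries the values $\hat A_i = A_i^{1/(2n)}$ and $\hat B_i = B_i (1 - \hat A_i)/(1 - A_i)$; interleave the pieces of the two segments in any fixed order and aggregate all $2n$ of them with repeated use of \eqnref{aggregate}. The $A$-component is then simply the product $\hat A_1^{\,n}\hat A_2^{\,n} = \sqrt{A_1 A_2}$, independent of $n$, which is the claimed geometric mean. The $B$-component collapses to a geometric series whose sum is $(\hat B_1 \hat A_2 + \hat B_2)(1 - \sqrt{A_1 A_2})/(1 - \hat A_1 \hat A_2)$ up to an $O(n^{-1})$ dependence on the interleaving; passing to the limit $n \to \infty$ with $1 - A_i^{1/(2n)} = -\ln A_i/(2n) + O(n^{-2})$, and treating the degenerate case $A_i = 1$ as in \eqnref{segsplit}, this tends to $\tfrac12(B_1 + B_2)$ whenever $A_1 = A_2$ and, via \eqnref{ABconstant}, to $\tfrac12(\gamma_1 + \gamma_2)\Delta x$ in the transparent limit $\beta_i \to 0$ --- in both cases matching \eqnref{segaverage}.

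The main obstacle is honesty about the generic case: when $A_1 \neq A_2$ and the $\beta_i\Delta x$ are not small, the limit of the $B$-component is $(\bar\gamma/\bar\beta)(1 - \sqrt{A_1 A_2})$ with $\bar\beta = \tfrac12(\beta_1 + \beta_2)$ and $\bar\gamma = \tfrac12(\gamma_1 + \gamma_2)$, which differs from $\tfrac12(B_1 + B_2)$ by a higher-order term that vanishes when $\beta_1 = \beta_2$ or $\gamma_1 = \gamma_2$. So the real content of the proof is that \eqnref{segaverage} is the simplest symmetric, idempotent rule that is exact for the transmission factor and correct to leading order for the emitted intensity; since averaging is only ever invoked to reconcile small geometric overlaps of segments, this trade-off is acceptable, and as a bonus it keeps the rule independent of the finer data $\beta_i$ and $\gamma_i$.
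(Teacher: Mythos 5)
Your construction is essentially the paper's own: split each of the two coincident segments into $2N$ equal pieces via \eqnref{segsplit}, interleave them, aggregate with \eqnref{aggregate}, and let $N \to \infty$; the $A$-part, $\sqrt{A_1 A_2}$, comes out exactly in both treatments. Where you genuinely differ is in the handling of the limit of the $B$-part. The paper invokes the approximation $1 - A^q \approx q(1-A)$, which replaces each piece's emission by $B_i/(2N)$ and leads directly to the arithmetic mean; you instead sum the geometric series exactly and correctly observe that the strict limit is the constant-coefficient ``physical average'' $B = (\bar\gamma/\bar\beta)\,(1 - \sqrt{A_1 A_2})$ with $\bar\beta = \tfrac12(\beta_1+\beta_2)$, $\bar\gamma = \tfrac12(\gamma_1+\gamma_2)$ (consistent with \eqnref{betagammafromAB} and \secref{moregroup}), which coincides with $\tfrac12(B_1+B_2)$ exactly when $A_1 = A_2$ and to leading order when the $\beta_i \Delta x$ are small --- precisely the regime in which the paper's approximation is justified, since $1 - A^q \approx q(1-A)$ requires $1-A$ small, not merely $q \ll 1$. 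Your route therefore buys an honest delimitation of what \eqnref{segaverage} actually is: exact for the transmission factor, symmetric and idempotent, and correct for $B$ to leading order in the optical thickness, i.e.\ a modeling choice rather than an identity; the paper's shorter argument reaches the stated formula at the price of leaving the approximation's range of validity implicit, and your added closure/symmetry/idempotence checks are a sensible complement not present there. One small correction to your aside: the discrepancy from the physical average vanishes exactly only for $\beta_1 = \beta_2$ (equivalently $A_1 = A_2$); for $\gamma_1 = \gamma_2$ only the leading correction, proportional to $(\gamma_1-\gamma_2)(\beta_1-\beta_2)\Delta x^2$, is zero, not the full difference.
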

\begin{proof}
  If we divide the length of each segment by a number $2N$, according to
  \eqnref{ABconstant} this amounts to values
  \begin{equation}
    \bar A_i = A_i^\frac1{2N}
    , \qquad
    \bar B_i = B_i \frac{1 - \bar A_i}{1 - A_i} .
  \end{equation}
  Aggregating these two by using \eqnref{aggregate} and the approximation $1 -
  A^q \approx q (1 - A)$ for $q \ll 1$ and $A^q$ near $1$, we obtain
  \begin{equation}
    \bar A = \sqrt{ A_2 A_1 }^\frac1N
    , \qquad
    \bar B = \frac1{2N} \left( B_2 + A_2^\frac1{2N} B_1 \right)
    .
  \end{equation}
  Aggregating $N$ of these results and taking the limit $N \to \infty$ yields
  the claim.
\end{proof}


\subsection{Additional properties of segments}
\seclab{moregroup}

We can express absorption and emission coefficients in terms of the ray
segment's values by computing $f^{-1} (A, B)$,
where we use that $f$ is injective for constant $\beta (x)$ and $\gamma (x)$,
\begin{equation}
  \eqnlab{betagammafromAB}
  \beta = - \frac{\ln  A}{\Delta x}
  \qquad\text{and}\qquad
  \gamma = \frac{\beta B}{1 - A}
  .
\end{equation}
The formula for $\gamma$ has a well-defined limit for $A \to 1$.
This provides a mechanism to average non-constant $\beta(x)$, $\gamma(x)$ to
constants that are physically equivalent, i.e., yield an identical action on
the intensity:
\begin{enumerate}
  \item Compute $(A, B)$ from the general solution \eqnref{ABgeneral}
    to any desired accuracy.
  \item Compute the physical average
    by \eqnref{betagammafromAB}.
\end{enumerate}
By inspecting the definition of $A(x)$, we see that this leads to the
integral average for $\beta$ and a more complex formula for $\gamma$.

Similarly, the aggregation of two segments leads to the weighted averages
\begin{equation}
  \beta \Delta x = \sum_{i = 1}^2 \beta_i \Delta x_i
  \qquad\text{and}\qquad
  \frac\gamma\beta w = \sum_{i = 1}^2 \frac{\gamma_i}{\beta_i} w_i
\end{equation}
with $w = w_2 + w_1$, specifically
\begin{equation}
  w = 1 - A ,
  \qquad w_2 = 1 - A_2 ,
  \qquad w_1 = A_2 (1 - A_1) .
\end{equation}
Again, the emission coeficients $\gamma_i$ are convex combined if
$\beta_1 = \beta_2$ but not in general.
In addition, this formula is not suitable for the common case $A = 1$.
Thus, it is easier in practice to work in terms of $(A, B)$.

From physical intuition, we can neutralize the effect of a segment by running
through it backwards.
We may also ask how to achieve such neutralization by modifying the absorption
and emission coefficients.
\begin{thm}
  Let us introduce the transformation that reverses the interval,
  \begin{equation}
    \eqnlab{swapxy}
    y (x) = x_1 + x_2 - x
    .
  \end{equation}
  Swapping $x_1$ with $x_2$ in \eqnref{ABgeneral} yields the inverse segment
  \eqnref{identityinverse}.
  This result is reproduced by the identity
  \begin{equation}
    \eqnlab{substbetagamma}
    \gminus(A, B) = f ( -\beta \circ y , -\gamma \circ y ) .
  \end{equation}
\end{thm}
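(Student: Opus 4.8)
The plan is to establish the three assertions in turn, using the affine-map reading of a segment from \eqnref{IABcombine}: the tuple $(A,B)$ is the map $I_1 \mapsto I_2 = A I_1 + B$ that propagates the in-intensity to the out-intensity, so its functional inverse is $I_2 \mapsto I_1 = (1/A) I_2 - B/A$, i.e.\ the tuple $(1/A, -B/A)$. This recovers \eqnref{identityinverse} and pins down what ``the inverse segment'' means physically: the map carrying the out-intensity back to the in-intensity, consistent with the group inverse of Theorem~1.

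For the first claim I would simply exchange $x_1$ and $x_2$ in \eqnref{ABgeneral} and track the signed integrals. The transmission factor becomes $\widetilde A = \exp(-\int_{x_2}^{x_1}\beta) = \exp(\int_{x_1}^{x_2}\beta) = 1/A$. For the emission term, split the inner integral as $\int_x^{x_1}\beta = \int_x^{x_2}\beta - \int_{x_1}^{x_2}\beta$; the exponential then factors as $\exp(-\int_x^{x_2}\beta)/A$, and the reversed outer limits contribute an overall sign, leaving $\widetilde B = -\frac1A\int_{x_1}^{x_2}\gamma(x)\exp(-\int_x^{x_2}\beta)\,\dx = -B/A$. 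Hence the limit-swapped formula equals \eqnref{identityinverse}.

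For the identity \eqnref{substbetagamma} I would substitute $u = y(x')$ in each inner integral and $v = y(x)$ in the outer one. Since $y$ is an orientation-reversing involution of $[x_1,x_2]$, every such substitution produces a sign that cancels the explicit minus in $-\beta\circ y$ (resp.\ $-\gamma\circ y$) and swaps the limits of integration; term by term, $f(-\beta\circ y, -\gamma\circ y)$ on $[x_1,x_2]$ thus collapses to the limit-swapped form of \eqnref{ABgeneral} handled in the previous step, hence to $(1/A, -B/A)$. A conceptual cross-check, which also vindicates the stated physical intuition, comes from the chain rule: if $I$ solves \eqnref{Iode}, then $\widetilde I := I\circ y$ satisfies $\widetilde I' = (-\gamma\circ y) - (-\beta\circ y)\,\widetilde I$ with $\widetilde I(x_1) = I_2$ and $\widetilde I(x_2) = I_1$, so integrating this reversed ODE over $[x_1,x_2]$ realizes exactly the inverse map.

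The steps are routine calculus; the only real care needed is the orientation bookkeeping in the nested integrals, together with the observation that the $\beta \to 0$ limit is harmless here because \eqnref{ABgeneral} --- in contrast to \eqnref{betagammafromAB} --- depends smoothly on $\beta$, so no separate $A = 1$ argument is required.
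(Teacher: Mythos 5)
Your proof is correct and follows essentially the same route as the paper, which simply states that both the limit swap and the identity \eqnref{substbetagamma} "are verified by substituting into \eqnref{ABgeneral}"; you carry out exactly that substitution, with the orientation bookkeeping done correctly. The reversed-ODE cross-check via $\widetilde I = I\circ y$ is an extra conceptual touch not in the paper, but it is consistent and sound.
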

\begin{proof}
  Both the swap $x_i \leftarrow y (x_i)$ and \eqnref{substbetagamma} are
  verified by substituting into \eqnref{ABgeneral}.
\end{proof}
For constant absorption and emission, the formula simplifies to just taking
negative $\beta$ and $\gamma$.
This is consistent with both \eqnref{ABconstant} and \eqnref{betagammafromAB}.
%

Choosing negative coefficients is not only a mathematical possibility, but may
be related to physical processes:
A negative absorption coefficient $\beta$ represents a proportional
amplification of light by the medium (laser physics), and a negative $\gamma$
represents an absolute absorption per length unit (inverse led).
Naturally, the flipside of this flexibility are intensities that grow out of
bounds or become negative, which must be accounted for in the final translation
from intensity to pixel values.

\subsection{Numerical approximation of segments}
\seclab{numapproxseg}

In the preceding sections, we have established the mathematical tools required
to formulate our overall visualization procedure in \secref{visalgo}.
All that is required for an application is to produce values $(A, B)$ given a
ray segment running from $x_1$ to $x_2$.
We use this section to propose one way that is close to the true physics by
prompting the application for values of $\beta (x)$ and $\gamma (x)$ at
locations that are automatically decided by a user-provided accuracy
requirement.

There are two mathematical approaches to compute the change of intensity caused
by the medium along a ray segment.
One is numerically solving the ODE \eqnref{Iode}, and the other is to
integrating it analytically and evaluating the integral by quadrature.
We discuss both here, beginning with the ODE solve.

Considering the reformulation \eqnref{knownIAB}, we find two independent ODEs
\begin{equation}
  \eqnlab{ODEsAB}
  A' (x) = - \beta (x) A (x)
  , \qquad
  B' (x) = \gamma (x) - \beta (x) B (x)
  .
\end{equation}
%
Choosing an explicit Runge-Kutta (RK) method with $M$ stages and
coefficients $c_i$, $a_{ij}$, and $b_j$,
we obtain the evaluation points
\begin{equation}
  \xi_i = x_1 + c_i \Delta x , \qquad i = 0, \ldots, M - 1
\end{equation}
and the stage values
\begin{equation}
  \eqnlab{RKresultAB}
  \bar A_i  = 1 + \Delta x
    \sum_{j = 0}^{i - 1} a_{ij} \left( - \beta ( \xi_j) \bar A_j \right)
  , \qquad
  \bar B_i  = \Delta x
    \sum_{j = 0}^{i - 1} a_{ij}
    \left( \gamma ( \xi_j) - \beta (\xi_j) \bar B_j \right)
  .
\end{equation}
When selecting subdiagonal-explicit RK methods, such as explicit Euler, Heun's
method of order 2 or 3, or classical RK4, the sums collapse to one entry
containing $a_{i,i-1}$.
The equations for the final values also have the form above when setting
$a_{Mj} = b_j$, $j = 0, \ldots, M - 1$, and $a_{MM} = 0$, which leads to $A =
\bar A_M$, $B = \bar B_M$.

Explicit RK methods have a restriction on the interval $\Delta x$ to ensure
stability.
For the type of ODE \eqnref{ODEsAB}, the limit is of the order $1 / \beta$.
While proceeding with the solve, we watch the evaluations for a
condition $\Delta x \beta (\xi_j) > c_\rRK$.
If it arises, we abandon this RK step and replace it with two RK steps, each
with one half of the original length.
We repeat this procedure recursively for a problem-adaptive solve, knowing that
the recursion is guaranteed to terminate.
The computation is numerically stable if, conservatively, $c_\rRK \le 1/2$.
We may reduce $c_\rRK$ further to improve the accuracy of the method with an
error rate of $c_\rRK ^M$.
The adaptive RK solve is shown in \algref{adaptiverk}.
%
\begin{algorithm}[tbp]
  \KwIn{$\Delta x = x_2 - x_1 \ge 0$, stability parameter $c_\rRK > 0$}
  \For{$i = 0, \ldots, M$}{
    Compute RK stage $(\bar A_i, \bar B_i)$
    by \eqnref{RKresultAB}\;
    \If{$i > 0$ and above computation evaluates
        $\Delta x \beta(\xi_{i - 1}) > c_\rRK$}{
      $x_{1/2} = \frac12 (x_1 + x_2)$
        \tcc*{split at midpoint}
      \treturn $\tsegeval (x_{1/2}, x_2) \gplus
                \tsegeval (x_1, x_{1/2})$%
        \tcc*{recursion}
    }
  }
  \treturn $(\bar A_M, \bar B_M)$
  \caption{$(A, B) \leftarrow \tsegeval (x_1, x_2)$
           by adaptive ODE solve}
  \alglab{adaptiverk}
\end{algorithm}

To avoid the stability issue altogether, we may switch to an implicit RK
method.
For the Gau\ss{} method of $M = 2$ stages and order 4, we find the evaluation
points $\xi_{0/1} = \frac12 \pm \sqrt{3} / 6$ and solve the RK system
analytically for
\begin{equation}
  \eqnlab{gaussAB}
  A = \frac{1 - a_1 + a_2}{1 + a_1 + a_2} ,
  \qquad
  B = \frac{b_1 + b_2}{1 + a_1 + a_2} ,
\end{equation}
where
\begin{subequations}
\begin{align}
  \eqnlab{gaussABab}
  a_1 & = \Delta x \left( \beta (\xi_0) + \beta (\xi_1) \right) / 4 ,
  &
  a_2 & = \Delta x^2 \beta (\xi_0) \beta (\xi_1) / 12 ,
  \\
  b_1 & = \Delta x \left( \gamma (\xi_0) + \gamma (\xi_1) \right) / 2 ,
  &
  b_2 & = \Delta x^2
  \left(
    \beta(\xi_0) \gamma (\xi_1) - \beta (\xi_1) \gamma (\xi_0)
  \right) \sqrt{3} / 12 .
%
%
\end{align}
\end{subequations}
This formula is bounded regardless of $\Delta x$, but we see that the
limit of very large $\Delta x$ produces $A = 1$ and a difference of intensities
for $B$, both unphysical.
Thus, we use the recursion in \algref{adaptiverk} as before, which allows us to
enforce accuracy with rate $c_\rRK^{2M}$.

The approach by quadrature can be exemplified by using Simpson's rule,
characterized by $M = 3$ and known locations $c_i$ and weights $b_j$, as a good
compromise between simplicity, accuracy, and cost.
Considering the closed form \eqnref{ABgeneral}, we arrive at
\begin{equation}
  \eqnlab{simpson}
  A = \exp \Biggl( - \Delta x \sum_{j = 0}^2 b_j \beta (\xi_j) \Biggr) ,
  \qquad
  B = \Delta x \left( b_0 A \gamma (\xi_0) +
                      b_1 \sqrt A \gamma (\xi_1) +
                      b_2 \gamma (\xi_2) \right)
  .
%
%
\end{equation}
A disadvantage is that the computation involves exponentials and is thus more
expensive than the numerical ODE solve.
An advantage is that $\Delta x$ may be arbitrarily large and the result stays
physical.
We recommended again to embed the quadrature formula into the recursion
described above, since the accuracy control by $c_\rRK^M$ remains valuable.

\subsection{Visualization of surfaces}
\seclab{surfaces}

Surfaces can be understood as 2D manifolds idealized from very thin 3D objects.
Thus, a surface has a single intersection point with a ray, unless the ray is
parallel or the surface doubles back to meet the ray several times.
To model the optical parameters of the surface, we assume a surface thickness
$h \ll \Delta x$.
The distance traveled inside the surface is $h / \cos (\phi)$,
where $\phi$ is the angle between the ray and the surface normal at the
intersection point.
Since $h$ is so small, it is justified to assume constant coefficients $\beta$,
$\gamma$ for the surface's material.

If we prescribe a desired transparency factor $A_s$ and absolute intensity
$I_s$ produced by the surface, we obtain the value $B_s$ by
\eqnref{BIrelation}.
This computation is independent of the actual thickness assumed.
In other words,
we are not required to choose any particular value for
$h$, and the method is well defined for $h \to 0$.
%
This approach extends to rendering opaque surfaces.
Instead of trying to approximate the limit $\beta \to \infty$, we may directly
set $A_s = 0$ and $B_s = I_s$ with no change to the aggregation logic.
To incorporate the angle, we rely once more on \eqnref{ABconstant} and define
\begin{equation}
  \eqnlab{ABphi}
  A_\phi = A_s^{1 / \cos \phi} , \qquad B_\phi = B (A_\phi, I_s) .
\end{equation}

\section{Visualization Algorithm}
\seclab{visalgo}

In this section we go through the different phases of the overall parallel
visualization algorithm.
As mentioned above, we assume that the scene is stored in distributed memory,
where the partition is defined by that of the space-forest.
The relevant scene data is available from the individual leaves.
Each leaf is stored on precisely one process, on which it is local.
It is remote to all other processes.
These assumptions are naturally met when visualizing computational data
in-situ, and can be established by parallel forest traversal algorithms in the
case of discrete data such as point sets or constructive solid geometry (CSG)
descriptions.
We demonstrate one example for each type of data in \secref{examples}.

We consider the MPI model of parallelization, identifying an MPI rank with a
process.
The algorithms we present still apply if we further partition each MPI rank
between CPU cores and/or accelerator units, and use threads instead of
processes.
We generally expect that all subalgorithms run simultaneously on all processes.
This includes the case when some processes may have no relevant data to process
in one or more of the phases despite our efforts to load-balance.

We allow for multiple images created in one rendering, where each image
instance may have a different camera definition and/or different rules for
background/skybox data and attenuation and emission.
Top-down traversals of the forest will continue until the last of the instances
considers a leaf or subtree invisible.
A node is invisible when its AABB falls out of the image area, or when it is so
small that it is no longer hit by a single ray.

Due to the aggregation logic laid out in \secref{vismodel}, we do not need to
visit the elements in back-to-front ray order.
Instead, we follow the space filling curve encoding the distributed mesh,
which ensures linear and cache-friendly memory access.
We assume that the total number of MPI processes is given and fixed, and that
the user has specified another and usually smaller number of writer processes
to store the images on disk using parallel I/O.


The main challenge in parallel is thus to transport the image information from
the data-oriented partition of the input forest to the pixel-oriented partition
of the output image.
To this end, we propose the phases of reassigning the input data, rendering it
into ray segments, aggregating the segments, and compositing them to pixels.
Except for the actual rendering of segments, all phases require careful
redistribution of data in parallel
as described in the following.

\subsection{Culling and pre-partition}
\seclab{prepart}

The input forest is usually partitioned to equidistribute numerical data
between the processes.
Raycasting, on the other hand, will benefit from a parallel equidistribution of
visible pixels.
Since the most expensive operation in our visualization pipeline is the
numerical integration of ray segments for each element (\secref{numapproxseg}),
it makes sense to reassign the data in parallel before rendering.

%

We render non-destructively, meaning that we treat the input as read-only.
This first phase is to create a visualization forest (``vforest'') structure
repartitioned by pixel count, copying only the data of the visible elements
for the upcoming rendering step.
\begin{enumerate}
  \item Initialize an empty vforest, e.g.\ using
        \pforestbuild \cite[Section 3]{Burstedde18}.
  \item Run a top-down traversal of the local input elements, e.g.\ using
        \pforestlocal \cite[Section 3]{IsaacBursteddeWilcoxEtAl15},
        tracking which of the image instances consider a tree node visible.
  \begin{itemize}
    \item Each tree node has multiple AABBs, one for each image instance.
          Whenever the AABB of a node falls outside of an image instance's view
          or shrinks to zero pixels, that instance is removed from the list.
    \item When the list is empty, that is, the AABB of the tree node has become
          invisible in all instances, the node is pruned from the recursion.
    \item Whenever we reach a leaf, it is visible to at least one instance and
          we add it to the vforest and copy its element data.
  \end{itemize}
  \item Finalize the vforest by adding the smallest possible set of coarse
        invisible elements that make it a complete octree; this is still part
        of \pforestbuild.
  \item Repartition the vforest, where each leaf has a weight proportional
        to its visible pixel count summed over all instances.
  \item Communicate the data copied above to the new owner of its element,
        e.g.\ by calling the \pforesttrans functions
        \cite[Section 6]{Burstedde18}.
\end{enumerate}

The parallel forest build, repartitioning, and transfer of data are fast
operations for an efficient forest-of-octree implementation, requiring only a
fraction of a second.
The run time of copying the visible element data will be proportional to the
input set and generally a lot less costly than computing the data in the first
place.
Still, when in doubt, the entire procedure may be skipped since it is
technically optional.

We reason against skipping, however, by pointing out several benefits:
Depending on the camera positions relative to the domain, a relatively small
number of elements may be visible at all.
When using extreme adaptive mesh refinement, some parts of the domain may be
refined so deeply that many elements are not touched by any ray.
In both cases, the vforest may have a lot less leaves than the input forest,
considerably speeding up any subsequent traversal and computation.
The repartitioning by pixel count will help to improve the load balance of the
rendering phase described next for an additional speedup.

\subsection{Leaf-level rendering}
\seclab{leaflevel}


To turn simulation data into color ray information along a ray, we
process the local leaves of the vforest, again by using \pforestlocal.
It is communication free as it descends the local portion of each tree
top-down.
As in the previous \secref{prepart}, the purpose of the recursion is to cull
subtrees from the rendering process as early as possible by AABB checks of the
nodes.
Eventually we arrive at all local elements that are at least partially visible.
We derive the rectangle of possibly intersecting rays from an element's AABB
and use this information to compute the color information tuples $(A, B)$ by
\secref{numapproxseg} and \secref{surfaces}, one leaf at a time.
Each is thus associated with a rectangle of rays, where each ray may be
characterized by zero, one, or more segments.
If there are more than one, which happens rarely and only for non-linear
geometries, we sort this small number by $z$ coordinate; this is the only
explicit sort we do.

This rendering phase is dominated by computing the ray intersections of the
local leaves contained in their respective bounding boxes.
The computational work is linear in the number of local elements weighted by
the number of rays per element.
%
This may no longer be the case if we use an adaptive integration to
achieve a prescribed accuracy $\beta \Delta t < c$ as proposed in
\algref{adaptiverk}.
Then elements with a size larger than a certain threshold will integrate more
than one intermediate segments in the recursion before returning the resulting
segment.
We can correct for this effect during the weighted partitioning described in
\secref{prepart}.

%

\subsection{Coarsening and post-partition}
\seclab{postpart}

At this point, we have augmented the visible elements with the relevant ray
segments and computed their color information in terms of the per-channel
coefficients $(A, B)$.
The number of visible elements is
unchanged from the space-forest that we received as input.  This number, even
when load-balanced by the local segment count, may be much too large globally
to proceed directly to the compositing step described in \secref{compositing}
below.

We observe the following, which applies as such in three dimensions only:
If we coarsen the vforest by replacing a family of same-size leaves with their
common parent, we reduce the number of elements by the factor eight.
This saves memory for all elements, visible or not.
The ray area of the parent becomes the union of the ray areas of all children.
By using the aggregation of all segments for a given ray by
\secref{cutoverlap}, contributed by all children that intersect this ray, we
compute the equivalent color information for the parent.
In doing so, we reduce the overall number of segments by roughly a factor two,
at a cost that is linear in the number of incoming child segments.
Due to our aggregation formulas, the process is lossless up to roundoff error.


After one aggregation step, we may wish to repartition the elements, again
weighted by their (updated) ray count.
If we do this, we to marshal/send and receive/unmarshal the segments to move
them to the new owners of the associated elements.
We can repeat this cycle of coarsening, aggregation, partition, and transfer
until the number of elements per process would become too low to be efficient,
or until the number of ray segments would become difficult to load-balance due
to the decreased element granularity.
In each stage, the communication volume and memory used by the ray segments is
roughly half that of the previous stage.
Generally, this aggregation algorithm is optional and configurable, for example
by restricting the maximum number of cycles.
In two dimensions, we deal with point intersections instead of segments, but
the algorithm may still be beneficial in case of many invisible elements.


\subsection{Compositing}
\seclab{compositing}

We face the challenge to complete the aggregation of ray segments until we
obtain the final pixel color for each ray.
While we have condensed the overall number of visible elements and ray segments
by coarsening, most rays still have their remaining segments scattered between
multiple processes according to the partition of the vforest.
Furthermore, the parallel partition of elements depends on their positions in
space and is not aligned geometrically with the camera or the pixels of the
image plane.

To resolve this situation, we introduce an independent MPI partition that
divides the images into a configurable number of equal-size tiles, each of
which we assign to a distinct writer process.
The number of writer processes can be chosen to optimize the write time using
MPI I/O over a subcommunicator.
Too few writers will not deliver the maximum I/O bandwidth, while too many will
suffer increasing network load.

The image partition is configured by the dimensions of the image instances and
the pixel count of the tiles and thus known to all vforest processes at no
extra cost.
In order for these to send their ray segments to the matching writer processes
in the image partition, we have to determine the pairs of sending and receiving
processes.
We aim to compute the minimal set of such pairs, such that each process sends
segments to the writers responsible for only the relevant tiles.
The set of matching writers is thus different for every vforest process and not
known a priori.
In the following, we describe how we determine all pairs without any
communication, which keeps the algorithm synchronization-free and scalable.

\subsubsection{Senders determine the receiving processes}
\seclab{sending}

Every process $p$ in the vforest partition must determine the writer processes
$q'$
whose tiles overlap with any of $p$'s elements' bounding boxes.
This is necessary so the ray segments can be marshalled and sent as MPI
messages.
The task can be executed in a top-down and process-local manner similar to the
rendering phase described in \secref{leaflevel} except for the stopping
criteria.
In fact, should we be skipping the coarsening and post-partition of
\secref{postpart}, we would piggy-back this logic onto the rendering recursion.

In \pforestlocal we compute the AABB for every node of the tree that we
enter, starting with its root.  Now, if the AABB does not overlap the image at
all, we stop the recursion at this node.  If the AABB is fully contained in one
of the image tiles, we record its owner process as a receiver and stop the
recursion as well.  Now, if the recursion is still active, we know that more
than one tile overlap the node's bounding box.  If this node is a leaf, we
record the owner of each tile, marshal its segments accordingly and stop.
Else, we descend into the recursion for each child node.

We must arrange for the appropriate number of send buffers to hold
the marshalled ray segments.
After the recursion has completed, $p$ posts one non-blocking MPI send to
each receiver $q'$.

\subsubsection{Receivers determine the sending processes}
\seclab{receiving}

By symmetry, every writer process $q$ in the image partition must determine the
processes $p'$ in the vforest partition that hold ray segments overlapping
$q$'s tile.
This is necessary so the ray segments can be received as MPI messages into
a known number of pre-allocated slots.

In principle, this operation mirrors the one from the above \secref{sending}.
(In fact, it does not matter in which sequence the two are executed---they
might even be processed concurrently using a threaded MPI implementation.)
However,
from the perspective of a process responsible for an image tile, it is tricky
to collect the necessary information on the geometry/tile overlap, since many
or all elements relevant for a tile are stored on other processes.
Due to the distributed storage of elements and the segments associated, these
cannot be known to $q$.



What comes to the rescue is the metadata on partition markers that is
maintained in the \pforest library.
This metadata is available on every process and encodes the location of each
process' first leaf in the space-forest, more precisely, the number of the tree
and the lower left front coordinates of each process' first element.
Even though the leaves themselves are distributed and only known to their
respective owner processes, every writer process can execute a virtual top-down
recursion over the partition based solely on the metadata.
Moreover, since the geometry is defined per-tree and accessible to all
processes, we can apply the geometry transformation to any of the virtual
nodes encountered in the traversal
to compute their AABB.

Beginning with the root of each tree, we run a binary search over the markers
to identify the range of processes that own this tree's leaves.
We then
descend into the recursion, where we stop if the bounding box of the current
node does not intersect the writer's tile.
Otherwise we check if the range contains just a single process.
In this case, we add it to the list of senders and stop the recursion.
On the other hand, if the range is non-trivial, the node must be a subtree and
we proceed with the recursion into each child, where we split the range of
partition markers further using a nested binary search.
It is important to recall that the nodes we process here are purely virtual in
the sense that we have no access to their owner processes' data.
We just use their coordinates to compare their position to the partition markers.
By construction, this approach terminates and discovers all processes that will
be sending at least one ray segment to the writer process.
Also by construction, each writer process will exclusively receive segments
inside its own tile.

The complete logic of this partition traversal is detailed elsewhere
\cite[Section 4]{Burstedde18}.
After the sending processes are determined, each writer process allocates the
appropriate number of receive buffers and posts one non-blocking MPI receive
for each sender.

\subsubsection{Completion}
\seclab{completing}

The algorithms described above allow us to communicate the ray segments in a
decentralized point-to-point pattern between known senders and receivers.
Since we use non-blocking MPI, a writer process can overlap the unmarshalling
and pixel-wise aggregation of ray segments that have already been received with
the time waiting for the arrival of the remaining segments.
After this is done, each writes its tile of the image to the output file.
Here we use the uncompressed netpbm format \cite{MurrayvanRyper96} that we can
address with the standard MPI I/O mechanism based on offsets and strides over
one large image file per instance.


We summarize the parallel visualization pipeline in \algref{pipeline}.
For clarity, we do not detail the generalization to multiple image instances,
and we omit the MPI wait and I/O open/close calls.
Since we are using non-blocking MPI, the sending and receiving parts may be
swapped or run simultaneously using threads.
Strictly speaking, culling/pre-partition is optional as well as
coarsen/post-partition.
If we omit the latter, we can coalesce the respective top-down recursions of
\secref{leaflevel} and \secref{sending}.
\begin{algorithm}[tbp]
\tcc*{cull + pre-partition (\secref{prepart})}
  Top-down recursion over local nodes
  \Begin {
    stop (if AABB is not intersecting image)\;
    leaf (add element to vforest, copy data)\;
  }
  Complete vforest; work with it from now on\;
  Partition (elements weighted by number of ray segments)\;
  Transfer (data copied to new owners)\;
\tcc*{leaf-level rendering (\secref{leaflevel})}
  Top-down recursion over local nodes
  \Begin {
    stop (if AABB is not intersecting image)\;
    leaf (compute ray segments for element)\;
  }
\tcc*{coarsen + post-partition (\secref{postpart})}
  \For {$[0, \ldots, \text{\# coarsening iterations})$ or until too few elements}{
    Coarsen (by one level; aggregate child segments to parent)\;
    Partition (elements weighted by number of ray segments)\;
    Transfer (ray segments to new owners)\;
  }
\tcc*{all processes send (\secref{sending})}
  Top-down recursion over local nodes
  \Begin {
    stop (if AABB is not intersecting image)\;
    stop (if AABB contained in single tile owned by $q'$: record as receiver)\;
    leaf (record owners $q_i'$ of all tiles that AABB intersects%
          )\;
  }
  Post non-blocking sends of ray segments\;
\tcc*{writer processes receive (\secref{receiving})}
  \If {this is a writer process responsible for an image tile} {
    Top-down recursion over virtual nodes
    \Begin {
      stop (if AABB of node is not intersecting the tile)\;
      stop (if range of owners of node is a single process $p'$:
            record as sender)\;
    }
    Post non-blocking receives of ray segments\;
    Receive segments and aggregate into tile's pixels\;
    Use parallel I/O to write tile to image file\;
  }
  \caption{Parallel visualization}
  \alglab{pipeline}
\end{algorithm}

\subsection{A dynamic tree data structure for rectangles of ray segments}
\seclab{tiles}

We have so far not provided any details on the data structure we use to store
rectangles of ray segments.
This data structure must have the following features:
\begin{enumerate}
  \item
    It begins life when the rays are cast through the pixels of the element's
    AABB.
    Some rays in the box will not hit the element and produce empty pixels,
    while others produce the $(A, B)$ values for one segment.
    When using non-linear geometries, one element may occasionally give rise to
    two or more segments for the same ray;
    when compositing, we will be combining multiple ray segments for the same
    pixel.
    Thus, the data structure must be able to store a variable number of
    segments per pixel.
  \item
    When we composite the ray segments of two elements that have overlapping
    views, we need to be able to create the union of the two data structures.
    Even if each one is a rectangle, the union may not be.
    Since many more than two elements may overlap, the union will get
    progressively more ragged in shape.
    Thus, we must be able to encode and store the union of arbitrarily many
    rectangles.
  \item
    When creating the union of two structures, each pixel in the intersection
    must be processed to aggregate the ray segments from the two sources by the
    procedure displayed in \secref{cutoverlap}, reducing their number as much as
    possible.
  \item
    Most elements are small in relation to the image.
    Storing the rectangle of all image pixels where only a small fraction
    contains any segments would be wasteful and impractical.
    Some compression of empty pixels will be useful.
  \item
    When communicating the segments for final compositing and writing, segments
    covered by different writer tiles are sent to different processes.
    It will be advantageous if the data structure respects the boundaries
    between the tiles for easier marshalling.
\end{enumerate}

One way of satisfying the above requirements is to create an adaptive quadtree
where a node at level $\ell$ corresponds to a rectangle of edge length $2^{L -
\ell}$.
Here $L$ is chosen smallest such that the image is contained in the unit square
of length $2^L$.
A node at level $\ell = L$ corresponds to a single pixel.
When the image is rectangular or not an exact power of two in length, pixel
nodes may be inside or outside the image, while nodes with $\ell < L$ may be
entirely outside the image, entirely inside, or split.
When rendering several image instances simultaneoeusly, each is addressed by a
tree of the matching maximum depth $L$.

Each tree leaf may have a payload that is a rectangular subset of its image
area.
When inserting a new rectangle of ray segments into the tree, we begin a
recursion at the root and stop when we reach a leaf without payload that fits
around the rectangle, or a leaf where the union of its payload and the
rectangle is again rectangular and fits into the node's area.
Otherwise, we may have a leaf which we refine into its four children.
We split the rectangle to insert into four according to the children's image
areas and insert each of them by recursion.
We realize the union of two trees by a similar recursive procedure.

For each of the pixels of a rectangular payload, we store a variable number of
ray segments ordered by $z$ coordinate.
When merging a pixel common between existing and incoming payloads,
we proceed along their two sets of segments in a common loop akin to one step
of merge sort, which has a run time that is linear in the sum of existing and
incoming segments.

To align the tree subdivision with the image tiles to write, we choose all tile
lengths $2^t$ as the same power of 2.
Thus, they align with nodes of the image tree at depth $L - t$.
Since the number of writer processes is prescribed and must bound the number of
tiles from above, we take the smallest exponent $t$ that satisfies this.
Here we only count the tiles that overlap the image at least partially.
For multiple image instances, the number of visible tiles per instance
varies and we require their sum to be less equal the number of writers.

This data structure satisfies all of our requirements.
In particular, the data size when marshalling a tree is proportional to the
number of ray segments contained and does not depend on the dimension of the
image.
We implement the trees dynamically using caching where we can since millions or
even billions of these structures are created, split, merged, and destroyed in
one rendering.

%

\section{Numerical Examples}
\seclab{examples}

In this section we construct two different examples of data that we render
using an implementation of the above ideas.
We use sythetic data that is governed by a distributed adaptive space-tree
whose depth and detail we can configure freely.
Our focus is on scaling up the method, increasing the amount of data, the
number of MPI processes holding the data, and the number of writer processes
writing the image files.
We verify that the run times and the number of ray segments processed stay
bounded as expected.

\subsection{The Mandelbrot set}
\seclab{fractal}

Our first example renders a 2D surface positioned in 3D space.
We choose a fractal data set to provoke a rather irregular distribution of
elements, namely the well-known Mandelbrot set.
It is a subset of the complex plane, and a number $c \in \sC$ belongs to the
set if the iteration
\begin{equation}
  \eqnlab{mbiter}
  z_{n + 1} = z_n^2 + c
\end{equation}
stays bounded when applied to the initial value $z_0 = 0$.
The Mandelbrot set is thus contained in a circle of radius 2 around the origin
and symmetric to the real axis.

We define a space-forest of two quadtrees and a geometry transformation that
maps them to a hexagon in the $xy$ plane that we consider a part of $\sC$.
Its top corners are
\begin{equation}
  (-23/10, 3/4, 0), \quad (-1/4, 9/5, 0), \quad (13/10, 1, 0) ,
\end{equation}
and its bottom corners are derived by negating the $y$ coordinate.

We begin with a uniform refinement at a given level.
The mesh is augmented with piecewise bilinear finite elements whose degrees of
freedom are the non-hanging vertices in the mesh.
We use $xy$ coordinates of each vertex as $c$ and execute the iteration
\eqnref{mbiter} until it exits the 2-circle or until a maximum number of
iterations is reached.
This number is then divided by the maximum to yield a real number in $[0, 1]$.
One element has four such numbers at its corners that we use for bilinear
interpolation when we intersect the element with a ray passing through.
The result is converted to $(A, B)$ values separately for the three RGB
channels.

Now we run multiple cycles of refinement, repartitioning, and recomputation to
increase the resolution of the data.
Specifically, we refine every element whose four corner values are not all
equal and re-establish the 2:1 balance of neighbor element sizes.
The Mandelbrot iteration is performed anew and we run the next raycasting.
Throughout these adaptation cycles, we configure two image instances of the
same width and height, differing in camera position/angle and transfer
function.
We display some results in \figref{mbsetpics}.
\begin{figure}
    \includegraphics[width=67mm]{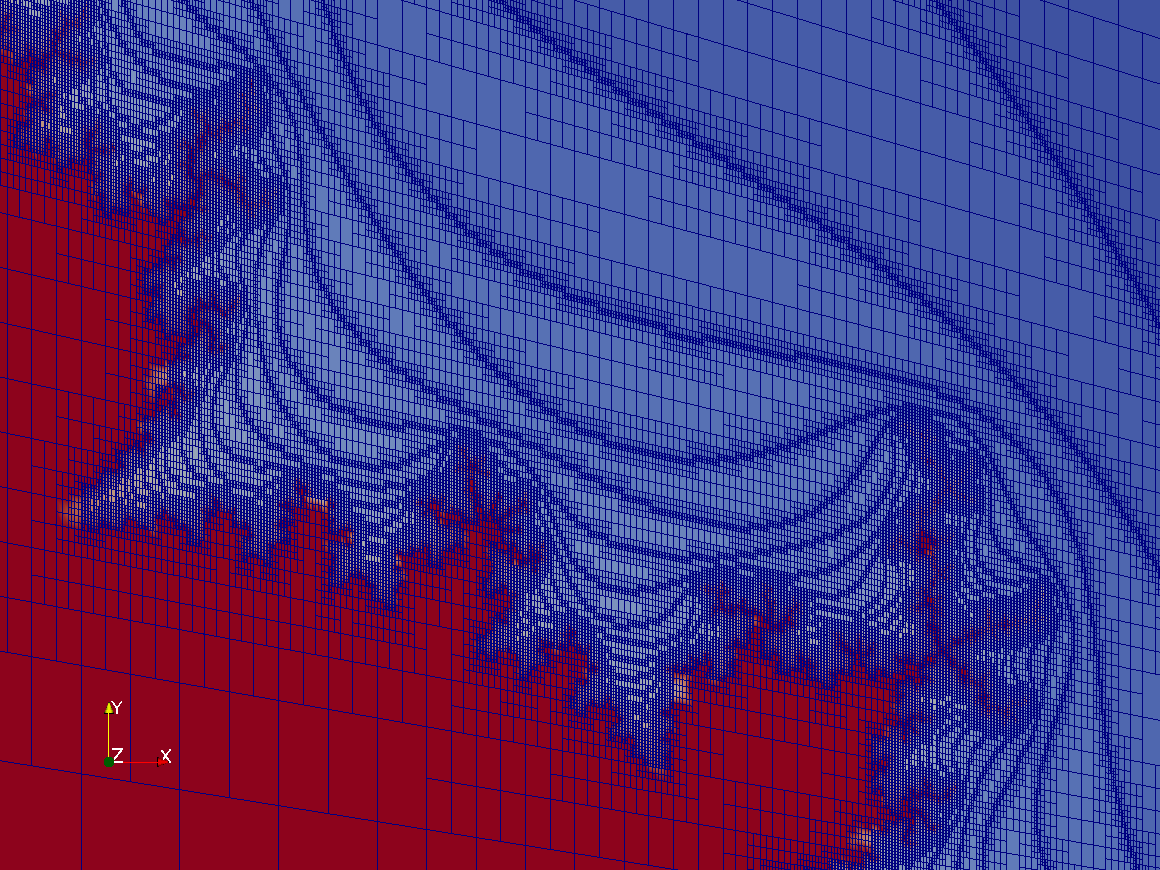}
    \hfill
    \includegraphics[width=67mm]{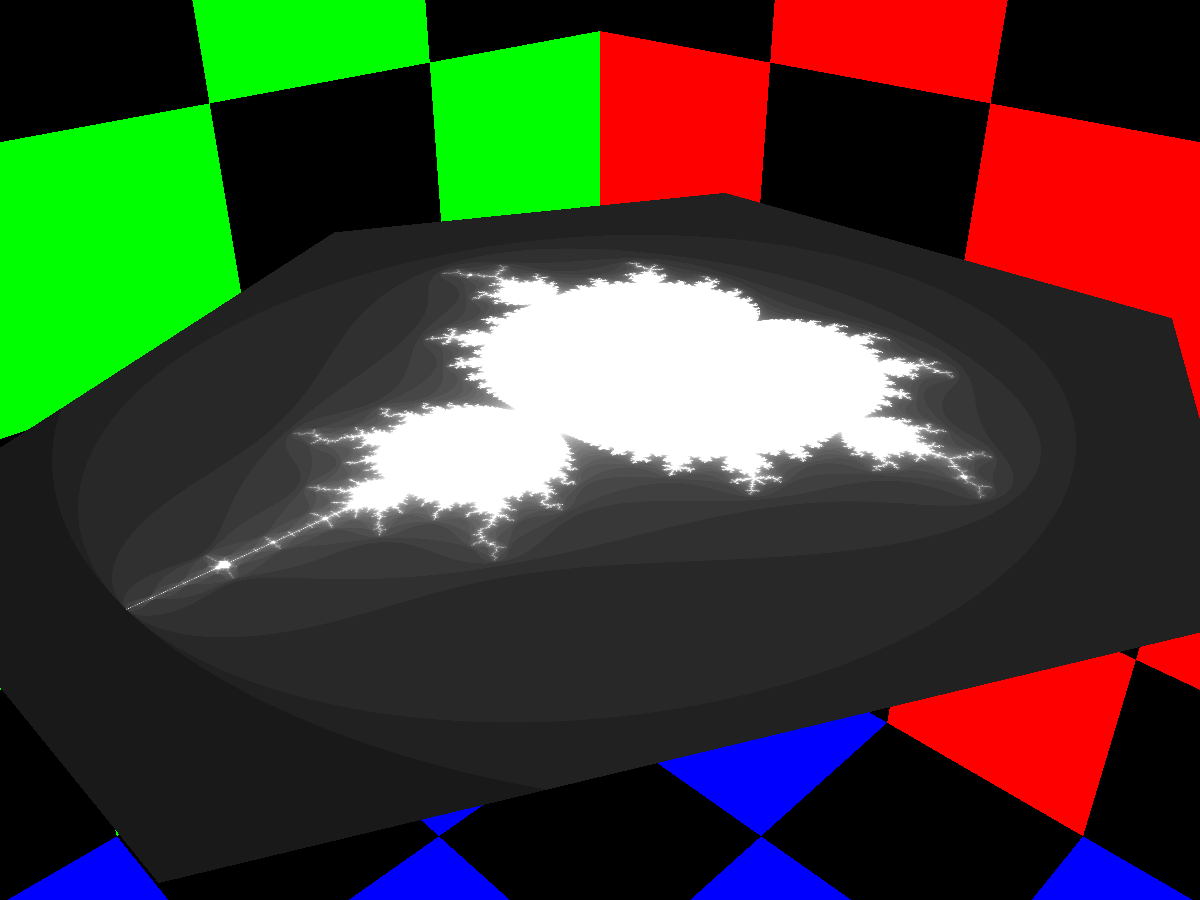}
    \\[2ex]
    \includegraphics[width=67mm]{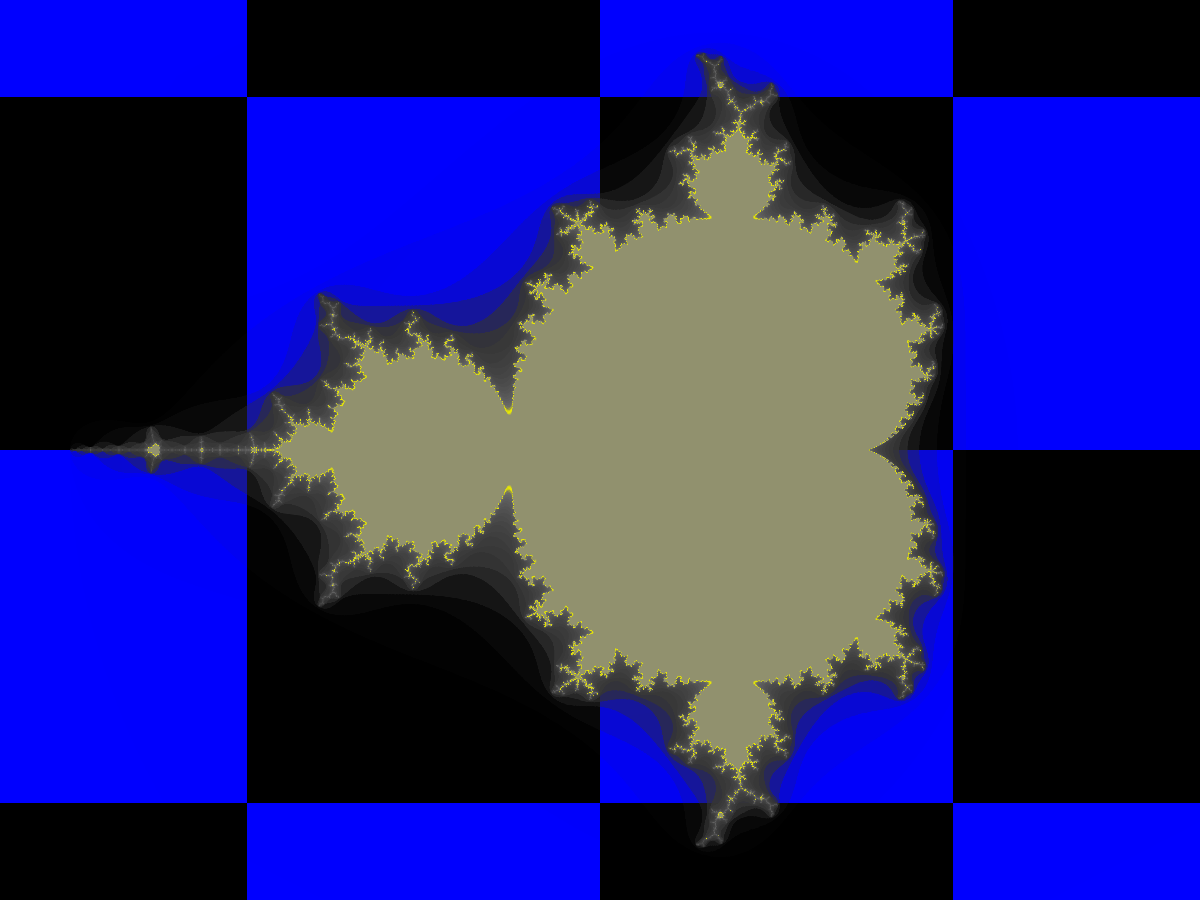}
    \hfill
    \parbox{67mm}{\mbox{}}
  \caption{Rendering of the Mandelbrot set with an iteration limit of 30.
    The top left shows a standard visualization based on writing VTK files,
    showing a zoom into the adaptive 2D mesh refined to level 11.
    Locations drawn red have reached the iteration limit and are considered
    inside the set.
    Blue locations have values less than 1, and we see that the decrease in
    steps of 1/30 is reflected in the refinement pattern.
    Top right: One image instance computed by raycasting, using an RGB
    checkerboard as skybox.
    The manifold is rendered opaque with a color scale from grey to white.
    Bottom left: For the other instance we use transparency and a color model
    that emphasizes an iteration count just below the maximum in yellow.%
    }%
  \figlab{mbsetpics}%
\end{figure}%

%
%
%
%

\subsection{Randomly distributed spheres}
\seclab{spheres}

Our 3D example is set in a cube that we populate with spheres of random
position and diameter.
Again, we aim at a scale-invariant design that we realize by prescribing the
probability density $\rho (s)$ of the sphere's cross section $s$.
When the spheres get smaller, we want to have more of them to give them the
same optical density, which lets us choose $\rho (s) = 1/s$.
If we choose a minimum $s_0$ and maximum $s_1$, the expected cross section $E$
is available in closed form.

We begin with a uniformly refined mesh as before and let each process iterate
through the local elements.
Dividing the cross section of the element $C$ by $E$ gives us the number of spheres
we would expect in this element.
Running a Poisson sampler with mean $\lambda = C/E$ gives us the number of
spheres to create in the element.
For each of the spheres to create, we sample the radius according to $\rho$ and
choose its center coordinates randomly inside the element.
We implemented the samplers based on excellent material
\cite{PressTeukolskyVetterlingEtAl99}.---
We allow for spheres that extend out of the element or out of the unit cube,
and spheres that overlap with other spheres.

We focus the boundary of the spheres using a fairly sharp transfer function.
The adaptation cycles proceed by refining any element that intersects any
sphere's shell between the radii $[1 - \varepsilon, 1 + \varepsilon] r$,
except when the element is already smaller than $\varepsilon r / 2$.
Since a shell generally touches multiple elements on remote processes,
the partition traversal \pforestparti proves valuable again.
We skip the details of repartitioning and communicating the sphere data
to all relevant elements.
Essentially, we ensure that any process only stores spheres whose shell
intersects its elements, and that the number of element/sphere checks stays
optimal.
In this example, the mesh is not 2:1 balanced.

We illustrate the effect of adaptive meshing and adaptive integration
in \figref{spherelevels}.
If the refinement is not sufficiently deep to resolve the spheres, the ray
integration does not sample the shells densely enough.
If we refine more deeply and choose a smaller threshold $c_\rRK$, the
artefacts disappear.
\begin{figure}
    \mbox{}\hfill
    \includegraphics[width=52mm]{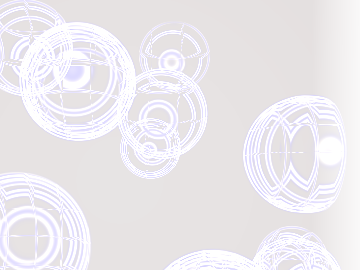}
    \hspace{3ex}
    \includegraphics[width=52mm]{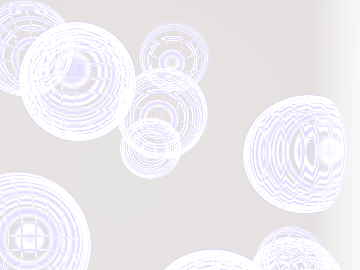}
    \hfill\mbox{}
    \\[2ex]
    \raisebox{4.5mm}{\includegraphics[height=58mm]{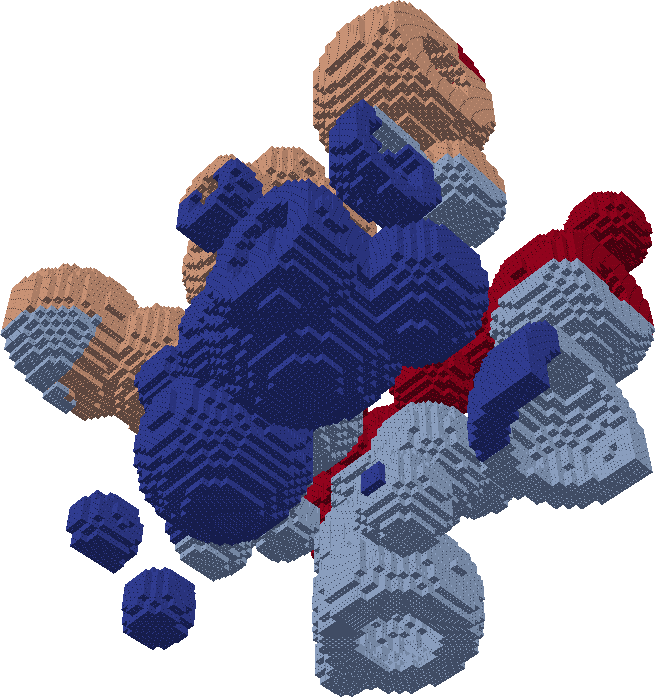}}%
    \hfill
    \includegraphics[height=67mm]{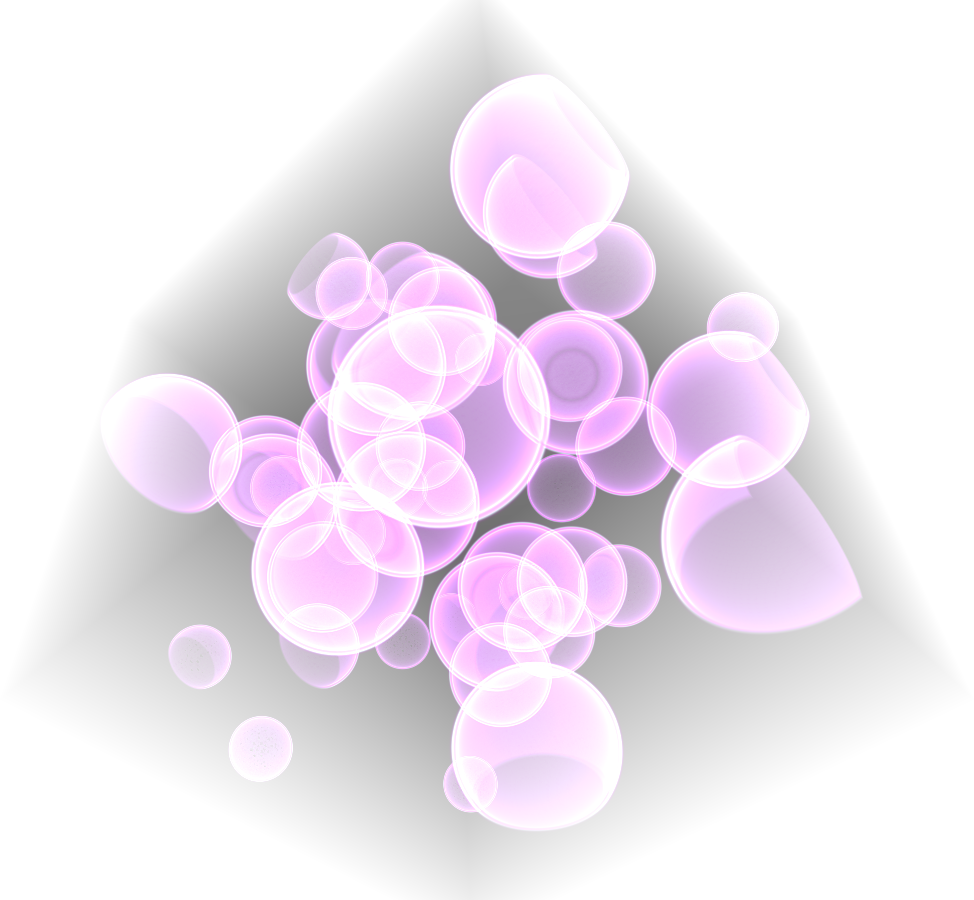}%
  \caption{%
    Top row: adaptive mesh to maximum level 4 (left) and 5 (right);
    rays integrated with Simpson's rule and threshold $c_\rRK = 1$.
    We observe aliasing due to insufficient refinement and undersampling.
    Bottom row: setup with 44 spheres refined to level 7.
    Left: VTK plot of the level 7 elements partitioned on 4 processes
    (color coded).
    Right: high-accuracy rendering using the implicit Gau\ss{} method
    and $c_\rRK = .02$.%
  }%
  \figlab{spherelevels}%
\end{figure}%

\section{Conclusion}
\seclab{conclusion}

%
%
%
%


\begin{acks}
The author would like to thank
Lucas C.\ Wilcox for suggesting to look at Chebfun's bivariate root finding
algorithms and A.\ Kraut for her feedback on Poisson distributions.
We would also like to thank Tobin Isaac for
co-maintaining and extending the \pforest software.

%
All \pforest algorithms used in this paper are available from its public
software repository \cite{Burstedde18a}.
We will make the building blocks of the visualization code, in particular the
adaptive tree structure for storing rectangles of ray segments, available after
modularizing it out of our example code.  This may, however, take some time.
\end{acks}

\bibliographystyle{ACM-Reference-Format-Journals}
\bibliography{group,ccgo_new}


\appendix
\setcounter{section}{0}

\section{Mathematical details and procedures}
\seclab{appmath}

\subsection{Computing the intersection of a ray with a tensor-product surface}
\seclab{intersection}

For any ray vector $\vvec r$ we can construct two vectors to form the
orthogonal system $(\vvec r, \vvec d_1$, $\vvec d_2)$.
We use them to project the three coordinate equations for the intersection of
the ray with a parameterized surface, written in unknowns $(\alpha, t_1, t_2)$
as
\begin{equation}
  \eqnlab{intersectionthreed}
  \pvec o + \alpha \vvec r =
  \pvec J_k (t_1, t_2) =
  \sum_{m_1, m_2 = 0}^R \pvec Z_{k; m_1, m_2}
    \psi_{m_1} (t_1) \psi_{m_2} (t_2)
  ,
\end{equation}
into two equations in $(t_1, t_2)$,
\begin{equation}
  \eqnlab{intersectiontwod}
  0 =
  \sum_{m_1, m_2 = 0}^R P_{j; m_1, m_2}
    \psi_{m_1} (t_1) \psi_{m_2} (t_2)
  ,
  \quad j = 1, 2
  ,
\end{equation}
where $P_{j; m_1, m_2} = \vvec d_j \cdot \pvec Z_{k; m_1, m_2}$.
We can think of this as two polynomial equations in $t_1$ with
coefficients in $t_2$, of which we have to find all common zeros $t_{1,i}$.
Once we achieve this, we can insert each zero back into the system and
solve for common zeros in the second variable.

The detailed procedure is known in commutative algebra as the resultant method
\todo{cite this}.
Any common zero implies a common linear factor in the two polynomials, and the
conditions for the existence of such a factor can be formulated in terms of
a matrix determinant.
A bilinear face parameterization yields at most two intersection points,
which can be computed by solving quadratic equations.
A biquadratic parameterization yields at most eight intersections, which
requires numeric root finding by bisection and/or Newton's method, combined
with successive division by linear factors.

\subsection{Computing bounding boxes for tensor product polynomials}
\seclab{boundingboxes}

Given a bilinear or trilinear geometry parameterized by Lobatto points (using
$\xi_{j} = j$, $j = 0, 1$, and $N = 1$ in \eqnref{nodal}, \eqnref{tensorgeom}),
the transformation from the reference element into space is a convex
combination of the corner points.
Thus, the bounding box is computed by a component-wise minimization and
maximization of these points.

For higher order geometries we can reduce the three-dimensional case to
considering each face in turn, at the core requiring to compute the bounding
box of a two-dimensional parameterized surface.
Maxima and minima can either occur at the corner points, along each edge,
or inside the surface area.
Corners are taken into account as is, by the same reasoning as above.
Extrema along an edge can be identified by performing one-dimensional root
finding of the derivative with respect to the edge coordinate.
The extrema inside the surface can be found by taking the derivatives of the
parameterization and performing the bivariate root finding procedure described
in \secref{intersection}.
In fact, much of the mathematics and the code can be reused.

\received{Month 2018}{Month 2018}{Month 2018}

\end{document}